\newtheorem{theorem}{Theorem}
\newtheorem{proposition}[theorem]{Proposition}
\newtheorem{corollary}[theorem]{Corollary}
  \newtheorem{example}[theorem]{Example}
\newenvironment{proof}{\noindent\textit{Proof.}}
{\QED\vskip\theorempostskipamount}
\def\petitcarre{\vrule height4pt width 4pt depth0pt}
\def\QED{\relax\ifmmode\eqno{\hbox{\petitcarre}}\else{%
  \unskip\nobreak\hfil\penalty50\hskip2em\hbox{}\nobreak\hfil
  \petitcarre
  \parfillskip=0pt \finalhyphendemerits=0\par\smallskip}
  \fi}
\newcommand{\False}{\textsc{false}}
\newcommand{\True}{\textsc{true}}
\newcommand{\Break}{\textsc{break}}
\newcommand{\Z}{\mathbb{Z}}
\newcommand{\ie}{{ that is, }}
\definecolor{lime}{HTML}{A6CE39}
\DeclareRobustCommand{\orcidicon}{%
	\begin{tikzpicture}
	\draw[lime, fill=lime] (0,0)
	circle [radius=0.16]
	node[white] {{\fontfamily{qag}\selectfont \tiny ID}};
	\draw[white, fill=white] (-0.0625,0.095)
	circle [radius=0.007];
	\end{tikzpicture}
	\hspace{-2mm}
}
\xdef\csname orcid\x\endcsname{\noexpand%
 \href{https://orcid.org/\csname orcidauthor\x\endcsname}{\noexpand\orcidicon}}
\title{Checking whether a word is Hamming-isometric in linear time}
\author{Marie-Pierre B\'eal\orcidA{}%
  and Maxime Crochemore\orcidB{}\\
  Univ. Gustave Eiffel, CNRS, LIGM\\ F-77454 Marne-la-Vall\'ee, France,}
\begin{document}
\maketitle
\begin{abstract}
  A finite word $f$ is Hamming-isometric if for any two words $u$ and $v$ of
  the same length avoiding $f$, $u$ can be transformed into $v$ by changing
  one by one all
the letters on which $u$ differs from $v$, in such a way that all of the
new words obtained in this process also avoid~$f$. Words which are not Hamming-isometric have been characterized as words having a border with two
mismatches. We derive from this characterization a linear-time algorithm to
check whether a word is Hamming-isometric. It is based on pattern matching algorithms
with $k$ mismatches. Lee-isometric words over a four-letter alphabet have
been characterized as words having a border with two Lee-errors. We derive from this characterization a linear-time algorithm to
check whether a word over an alphabet of size four is Lee-isometric.
\end{abstract}
{\bf Keywords:} Isometric words; Pattern matching with mismatches.
\section{Introduction}
Many parallel processing applications have communication patterns that can be
viewed as graphs called $d$-ary $n$-cubes. A $d$-ary $n$ cube is a graph
$Q_n^d$ whose nodes are the words of length $n$ over the alphabet $\Z_d= \{0,
1, \ldots, d-1\}$. Two nodes are linked if and only if they differ in exactly
one position, and the mismatch is given by two symbols $a$ and $b$ that
verify $a = b \pm 1 \mod d$. In order to obtain some variants of hypercubes
for which the number of vertices increases slower than in a hypercube, Hsu
\cite{Hsu1993} introduced Fibonacci cubes in which nodes are on a binary
alphabet and avoid the factor $11$. The notion of $d$-ary $n$-cubes has
subsequently be extended to define the generalized Fibonacci cube
\cite{IlicKlavzarRho2012, Klavzar2013, WeiZhang2016} ; it is the subgraph
$Q_n^2(f)$ of a $2$-ary $n$-cube whose nodes avoid some factor $f$. In this
framework, a binary word $f$ is said to be Lee-isometric when, for any $n \geq
1$, $Q_n^2(f)$ can be isometrically embedded into $Q_n^2$, \ie the
distance between two words $u$ and $v$ vertices of $Q_n^2(f)$ is the
same in $Q_n^2(f)$ and in $Q_n^2$.

On a binary alphabet, the definition of a Lee-isometric word can be
equivalently given by ignoring hypercubes and adopting a point of view closer
to combinatorics on words. A binary word $f$ is $n$-Hamming-isometric if for
any pair of words $u$ and $v$ of length $n$ avoiding $f$, $u$ can be
transformed into $v$ by exchanging one by one the bits on which they differ
meanwhile generating only words avoiding $f$. The word $f$ is
Hamming-isometric if it is $n$-Hamming-isometric for all $n$. The structure
of binary non-Hamming-isometric words has been characterized in
\cite{KlavzarShpectorov2012, Wei2017, WeiYangZhu2019} and extended to general
alphabets in \cite{AnselmoEtAl2021}. In particular, a binary word is
Hamming-isometric if and only if it is Lee-isometric. A word is not
Hamming-isometric if and only if it has a $2$-error border, that is if it has
a suffix that mismatches with the prefix of the same length in exactly two
positions. In \cite{KlavzarShpectorov2012, Wei2017, WeiYangZhu2019}
and \cite{AnselmoEtAl2021}, $2$-error border are called $2$-error overlap.

In the case of an alphabet of size $4$, non-Lee-isometric words have been
characterized in \cite{AnselmoEtAl2021} as words having a suffix and a prefix
of the same length which are at distance $2$ according to the Lee distance.

Binary Hamming-isometric words have also been considered in the
two-dimensional setting, and non-Hamming-isometric pictures are investigated
in \cite{AnselmoEtAl2020}, where they are called bad pictures.

In this paper we study the algorithmic complexity of checking whether a word
is not Hamming-isometric. Our approach is based on the characterization of $2$-error borders of
such words. The naive algorithm runs clearly in quadratic time. We show that
known algorithms for matching patterns with mismatches can be used to solve
this problem efficiently. Pattern matching with $k$ mismatches can be solved
by algorithms running in time $O(nk)$ (see \cite{LandauVishkin1985} and
\cite{GalilGiancarlo1986}). These algorithms are mostly based on a technique
called the Kangaroo method. This method computes the Hamming distance for
every alignment in time $O(k)$ by “jumping” from one error to the next error.
A faster algorithm for pattern matching with $k$ mismatches runs in $O(n
\sqrt{k \log k})$ \cite{AmirLewensteinPorat2004}. A simpler version of this
algorithm is given in \cite{NicolaeRajasekaran2017}.

We show two methods to check whether a word is not Hamming-isometric. The
first one uses the Kangaroo method which allows to derive an algorithm
running in time $O(kn)$ and using $O(n)$ space to check whether a word of
length $n$ has a $k$-error border. The method has a preprocessing of linear
time and space for computing the suffix tree of the word and to enhance it in
order to answer lowest common ancestor queries in constant time. This overall
leads to a linear-time and linear-space algorithm to check whether a word is
not Hamming-isometric, and hence also to check whether a binary word is
Lee-isometric. The second method uses the computation of a $k$-prefix table
that gives, for some word $u$ and every position on $u$, the length of the
longest proper factor of $u$ at this position that matches its prefix of the
same length with at most $k$ differences \cite{BartonEtAl2014}. The
computation of this $k$-prefix table is done in time $O(kn)$ using $O(n)$
space.

We also use the Kangaroo method to derive an algorithm running in time $O(kn)$ and
using $O(n)$ space on a constant size alphabet to check whether a word of size $n$ has a $k$-Lee-error
border and thus check in linear time whether a word over an alphabet
of size 4 is Lee-isometric.

\section{Definitions and background}
Let $A$ be a finite alphabet. A word $u$ in $A^\star$ is a finite sequence
$u[0]u[1] \cdots u[n-1]$ of letters in $A$, where $n$ is the length of $u$
and $u[i]$ are its letters. The suffix of index $i$ on $u$, denoted by $u_i$,
is the word $u[i] \cdots u[n-1]$ of length $n-i$. A suffix (or prefix) of it
is \emph{proper} if it is distinct from $u$ itself.

Let $k$ be a non-negative integer. We say that a word $u$ has a
$k$-\emph{error border} if $u$ has a proper suffix $s$ that matches its prefix of
the same length with exactly $k$ differences.
In other words, the Hamming distance between the suffix and the prefix is
$k$.

\begin{example}
  The word $1010011$ has a $2$-error border. Indeed, it has the prefix
  $101$ and the suffix $011$ and the Hamming distance between $101$
  and $011$ is $2$.
 \end{example}

Let $f$ be a finite word and $n$ be a positive integer. Then a word $u$ is
called $f$-\emph{free} if it does not contain $f$ as a factor, and $f$ is
called $n$-\emph{Hamming-isometric} if for every $f$-free words $u$ and $v$
of length $n$, the following holds: $u$ can be transformed into $v$ by
changing one by one all the letters on which $u$ differs from $v$, in such a
way that all of the new words obtained during this process are also $f$-free.
Such a transformation is called an $f$-\emph{free transformation} from $u$ to
$v$. Eventually, a word $f$ is said to be \emph{Hamming-isometric} if it is
$n$-\emph{Hamming-isometric} for every positive integer $n$.

The $d$-ary $n$-\emph{cube}, denoted by $Q_n^d$, is the graph whose vertices
are the words of length $n$ over the alphabet $\mathbb{Z}_d = \{0, 1, \ldots,
d-1\}$, and for which any two words $u$ and $v$ are adjacent if and only if
$u$ and $v$ differ by one unit at exactly one position, say $i$, that is,
$u[i] = v[i] \pm 1 \mod d$. The $d$-ary $n$-\emph{cube
avoiding $f$}, where $f$ is a word over the alphabet $\mathbb{Z}_d$ is the
graph $Q_n^d(f)$ obtained from $Q_n^d$ by deleting the vertices containing
$f$ as a factor \cite{AnselmoEtAl2021}.

A word $f$ over $\mathbb{Z}_d$ is said to be \emph{Lee-isometric} if for all $n \geq 1$,
$Q_n^d(f)$ is an isometric subgraph of $Q_n^d$.

\begin{example}
  The word $0301$ on the alphabet $\mathbb{Z}_4=\{0, 1, 2, 3\}$ is
  non-Lee-isometric. Indeed, the words $u=030001$ and $v = 030201$, which do not
  contain the factor $0301$, are at distance 2 but there is no path of
  length $2$ from $u$ to $v$ in $Q_6^4(0301)$ since any path of length $1$ changing the
  symbol of index $3$ of $u$ goes from $u$ to $030101$ or to $030301$
  and these two words both have the word $0301$ as factor.
  \end{example}

It is shown in \cite{AnselmoEtAl2021} that non-Hamming-isometric and non-Lee-isometric words
coincide for words on an alphabet of size at most three. But this property is
no more true for larger alphabets.

Hamming-isometric words have the following characterization obtained in
\cite{KlavzarShpectorov2012, Wei2017} for binary alphabets and in
\cite{AnselmoEtAl2021} for general alphabets.

\begin{proposition} \label{proposition.characterization}
  A word is not Hamming-isometric if and only if it has a 2-error border.
\end{proposition}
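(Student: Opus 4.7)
My plan is to prove both directions of Proposition~\ref{proposition.characterization} separately. For the converse direction, assume $f$ has a 2-error border, so that $f$ admits a suffix $s$ and a prefix $p$ of a common length $\ell$ differing at exactly two positions $i_1 < i_2$. I would construct two $f$-free words $u$ and $v$ of length $2|f| - \ell$ by overlapping two shifted copies of $f$ on $\ell$ positions: set
\[
u = f[0..|f|-\ell-1] \cdot m^u \cdot f[\ell..|f|-1],
\]
where $m^u$ is a length-$\ell$ word that agrees with both $s$ and $p$ on the $\ell - 2$ positions where they coincide and takes the values $s[i_1]$ at position $i_1$ and $p[i_2]$ at position $i_2$; define $v$ analogously with $m^v[i_1] = p[i_1]$ and $m^v[i_2] = s[i_2]$. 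A direct check shows that $u$ and $v$ have Hamming distance $2$, that both words avoid $f$ in the only two windows of length $|f|$ that could possibly carry it (those starting at positions $0$ and $|f|-\ell$), and that each of the two single-letter flips of $u$ at a mismatched position produces an occurrence of $f$ in one of those two windows. The subtlety is to ensure that no \emph{other} window of length $|f|$ of $u$ or $v$ coincides with $f$; choosing $\ell$ to be the shortest length of a 2-error border of $f$ rules out spurious occurrences.

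For the direct direction, I would argue by contrapositive: assume $f$ has no 2-error border and show $f$ is $n$-Hamming-isometric for every $n$. Let $u, v$ be $f$-free words of equal length with Hamming distance $k \geq 1$. If some mismatched position $p$ can be flipped in $u$ without creating an occurrence of $f$, we recurse on the resulting pair. Otherwise, for every mismatched position $p$, flipping $u[p]$ to $v[p]$ creates an occurrence of $f$ starting at some position $j(p)$ with $j(p) \leq p < j(p) + |f|$; since $u$ is $f$-free, this occurrence differs from the corresponding factor of $u$ precisely at position $p$. Choosing two distinct mismatched positions $p_1 < p_2$ whose windows $[j(p_1), j(p_1) + |f|)$ and $[j(p_2), j(p_2) + |f|)$ overlap, the comparison of the two shifted copies of $f$ restricted to their overlap reveals a pair of a suffix and a prefix of $f$ of the same length differing at exactly the two positions coming from $u[p_1] \neq v[p_1]$ and $u[p_2] \neq v[p_2]$, which is the forbidden 2-error border, a contradiction.

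The main obstacle will be the case analysis in the second direction: the two failing windows can sit in several geometric configurations, and one must ensure that two mismatched positions can always be chosen whose windows do overlap (for example, two mismatched positions closer than $|f|$ apart). In each geometry one must also check that exactly two positions of the overlap carry a mismatch and that these mismatches correspond to a suffix-prefix equality of $f$ with itself rather than a more general factor-factor coincidence. The cleanest way is to pick a minimum-Hamming-distance counterexample $(u, v)$ and to select the pair of mismatched positions with smallest gap, which forces the simplest geometry and yields the 2-error border directly.
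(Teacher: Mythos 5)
First, a point of comparison: the paper does not prove Proposition~\ref{proposition.characterization} at all --- it is imported from \cite{KlavzarShpectorov2012}, \cite{Wei2017} and \cite{AnselmoEtAl2021} --- so there is no in-paper proof to measure you against. Your outline does follow the strategy of those references (a self-overlap construction for the ``2-error border $\Rightarrow$ not isometric'' direction, and a minimal counterexample for the converse), but both halves have genuine gaps exactly where the published proofs spend their effort.

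In the sufficiency direction, your construction of $u$ and $v$ and the check that both one-letter flips create an occurrence of $f$ are correct; the whole difficulty is the $f$-freeness of $u$ and $v$ themselves, and your proposed fix --- take $\ell$ to be the shortest length of a 2-error border --- is asserted, not proved, and does not obviously work. If $f$ occurs in $u$ at a position $j$ with $0<j<|f|-\ell$, comparing that occurrence with the two designed copies of $f$ yields \emph{1-error} borders of $f$ (of lengths $|f|-j$ and $j+\ell$) and exact factor coincidences, not a 2-error border shorter than $\ell$; so minimality of $\ell$ among 2-error borders gives no contradiction. Handling these spurious occurrences is the bulk of \cite{Wei2017} and of the general-alphabet argument in \cite{AnselmoEtAl2021}. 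In the necessity direction, the missing idea is why two mismatched positions with suitably overlapping windows must exist. The key observation, absent from your sketch, is that the window $W_p$ of the occurrence of $f$ created by flipping $u$ at a mismatch $p$ must contain a \emph{second} mismatch $q$ of $(u,v)$: otherwise, after the flip, $u$ and $v$ would agree on all of $W_p$ and $v$ would contain $f$. Even granting this, to read a 2-error border off the overlap of $W_p$ and $W_q$ you need both $q\in W_p$ and $p\in W_q$, and ``pick the pair of mismatches with smallest gap'' does not force that configuration; the complete proofs need a further case analysis (also exploiting the windows created by flipping $v$). So the skeleton is right, but the two steps you flag as ``subtleties'' are precisely the content of the theorem and remain unproved.
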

\begin{example}
  For instance the words $11$, $1^n$ for $n \geq 1$ are Hamming-isometric. The word
  $1010011$ is not Hamming-isometric.
\end{example}

\section{Algorithms for checking whether a word is Hamming-isometric}

In this section, we use the characterization of non-Hamming-isometric words
in terms of 2-error border (Proposition~\ref{proposition.characterization})
and assume that the alphabet $A$ has a constant size. Observe that a
quadratic-time naive algorithm can be obtained to check whether a word is
non-Hamming-isometric by computing the Hamming distance between each
suffix of index $i$ and the prefix of the same length. We show that checking
if a word of length $n$ has a $k$-error border can be done in time $O(kn)$
and space $O(n)$.

\begin{proposition}\label{prop2}
It can be checked in time $O(kn)$ and space $O(n)$ whether a word of length
$n$ has a $k$-error border.
\end{proposition}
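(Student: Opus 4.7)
The plan is to apply the Kangaroo method of Landau and Vishkin, comparing every prefix of~$u$ with the suffix of the same length by means of longest common extension (LCE) queries answered in constant time on a suffix tree augmented with lowest common ancestor (LCA) data.

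For the preprocessing, I would build the suffix tree $T$ of $u$ with, e.g., Ukkonen's algorithm, which runs in $O(n)$ time and space for a constant-size alphabet, and then preprocess $T$ for constant-time LCA queries (Harel--Tarjan or Bender--Farach-Colton), again in $O(n)$ time and space. After this, for any indices $p, q \in \{0, \ldots, n-1\}$, the length $\mathrm{LCE}(p,q)$ of the longest common prefix of $u_p$ and $u_q$ is returned in $O(1)$ time as the string depth of the LCA of the leaves labelled $p$ and~$q$.

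In the main loop, for each candidate border length $\ell \in \{1, \ldots, n-1\}$ I count the mismatches between the prefix $u[0..\ell-1]$ and the suffix $u[n-\ell..n-1]$ by the Kangaroo jump: starting with $j = 0$ and a counter at zero, I repeatedly query $d = \mathrm{LCE}(j,\, n-\ell+j)$; if $j + d \geq \ell$ the comparison is complete, otherwise I increment the counter, set $j \leftarrow j + d + 1$, and continue, aborting as soon as the counter exceeds $k$. If the counter ends at exactly $k$, a $k$-error border of length $\ell$ is reported. Each iteration performs one $O(1)$ LCE query and the number of iterations for a given $\ell$ is at most $k + 2$ thanks to the early exit, so each $\ell$ costs $O(k)$ and the total cost is $O(kn)$ on top of the $O(n)$ preprocessing; all data structures fit in $O(n)$ space. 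The one point that requires a sanity check is that the self-LCE query is correct when $p$ and $q$ both index the same word~$u$, but this is immediate from the definition of the suffix tree: the LCA of the leaves labelled $p$ and $q$ encodes the longest common prefix of $u_p$ and $u_q$ regardless of the source string.
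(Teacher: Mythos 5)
Your proof is correct and follows essentially the same route as the paper: a linear-time suffix tree with constant-time LCA/LCE preprocessing, then a Kangaroo-style scan with at most $O(k)$ jumps per candidate border length, for $O(kn)$ time and $O(n)$ space overall. The only cosmetic difference is that you index candidates by the border length $\ell$ whereas the paper indexes by the starting position $i=n-\ell$ of the suffix; the two are equivalent.
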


\begin{proof}
We give two algorithms for solving this problem. The first one is based on a
technique called the Kangaroo method used for pattern matching with $k$
mismatches in $O(nk)$ time (see \cite{LandauVishkin1985},
\cite{GalilGiancarlo1986} and \cite{NavarroRaffinot2002}). These algorithms
compute the Hamming distance for every alignment in $O(k)$ time by “jumping”
from one error to the next. We use the Kangaroo method to check for each
index $i$ on a word $u$ of length $n$ whether it has a $k$-error border of
length $n-i$ in time $O(k)$.

To do so, we first compute in time and space $O(n)$ the suffix tree of
the word $u$.
The suffix tree is a compacted trie containing all the suffixes of $u$ by
their keys and positions on $u$ as their values 
\cite{CrochemoreHancartLecroq2007}, \cite{ApostolicoEtAl2016}. The tree has a
linear number of nodes and edges, each edge containing a pair of integers
identifying a factor of $u$, e.g. (position, length), hence the linear space
complexity. Suffix arrays can also be used for this problem. They contain
essentially the starting positions of suffixes of $u$ sorted in lexicographic
order.

To get the overall running time, we need to answer Lowest Common Ancestor
(LCA) queries in constant time \cite{HarelT84}, \cite{SchieberV88}.
LCA queries give us the longest common prefix between two suffixes of $u$,
essentially telling us where the first mismatch appears between a suffix of
$u$ and its prefix of the same length. This can be performed by first
constructing a Longest Common Prefix (LCP) array. The LCP array stores the
length of the longest common prefix between two consecutive suffixes in the
suffix array (lexicographic consecutive suffixes). This array can also be
constructed in linear time. To compute the length of the longest prefix
common to any two suffixes in the suffix tree (instead of consecutive
suffixes), we need to use some range minimum query data structure.

Thus, we assume that our suffix tree is enhanced to answer LCA queries in
constant time. This can be done in linear time and space. We denote by
$\text{LCA}(i,j)$ the query that returns in $O(1)$ time the length of the
common prefix between the suffix $u_i$ and the suffix $u_j$ of $u$.

For every index $i$, we try to find if the suffix of index $i$ of the
word $u$ has
$k$ mismatches with its prefix of the same length. We first compute
$\text{LCA}(0,i)$. Let this length be $\ell_0$. We skip the mismatching
character in $u_0$ and $u_i$ and try to find
$\text{LCA}(\ell_0+1,i+\ell_0+1)$. We repeat this to obtain $k$ mismatches
between $u_i$ and $u[0] \cdots u[n-i-1]$ or fail to obtain this condition.

The pseudo code of the technique is given in Algorithm 1. We maintain a
variable $\ell$ which gives, after the line 4 of Algorithm 1, the index 
of the current mismatch between $u_i$ and $u_0$. A variable $d$ contains the
current Hamming distance between $u[i] \cdots u[i + \ell -1]$ and $u[0]
\cdots u[\ell -1]$. It is increased by 1 at the line 8 since a mismatch has been
found.

Since there are at most $O(k)$ LCA queries for each index $i$, this can be
done in $O(k)$ time. The overall time complexity is thus $O(kn)$ and the
space complexity is $O(n)$.

We now show a second method to check whether a word of length $n$ has a
$k$-error border. We use the computation of a $k$-\emph{prefix table} as done
in \cite{BartonEtAl2014}. For each position $i$, we compute a table $\pi_k$
for which $\pi_k(i)$ is the length $\ell$ of the longest word $u[i] \ldots
u[i+ \ell-1]$ such that the Hamming distance between $u[0] \ldots u[\ell -1]$
and $u[i] \ldots u[i + \ell-1]$ is at most $k$ and $u[0] \ldots u[\ell -1]$
is proper prefix of $u$.

This computation can be done in time $O(kn)$ and space $O(n)$ (see
\cite[Theorem 5]{BartonEtAl2014}). It needs the computations of the
prefix array of $u$ and the longest common prefix array preprocessed
for range minimum queries. The longest common prefix array gives for
each index $r$ the length of the longest common prefix of the $r$th
suffix and the $(r-1)$th suffix in lexicographic order.

The existence of a $k$-error border is then obtained as follows. For $k \geq
1$, a word $u$ has a $k$-error border if and only if there is a position $i$,
$1 \leq i < n$, for which $\pi_k[i] = n-i$ and $\pi_{k-1}[i] < n-i$. Indeed
such a position $i$ exists if and only if there is a proper suffix $u[i]
\ldots u[n-1]$ of $u$ whose Hamming distance with $u[0] \ldots u[n-i-1]$ is
exactly $k$. The existence of a $k$-error border is thus obtained with
Algorithm \ref{algorithm.2} which is in $O(n)$ time. The overall time
complexity is again in $O(kn)$ and the space complexity is $O(n)$.
\end{proof}

\begin{algorithm} \label{algorithm.1}
\KwIn{A non empty word $u$ of length $n$, a non-negative integer $k$}
\KwOut{true if $u$ has a $k$-error border}
\BlankLine
\For{$i\leftarrow 1$ \KwTo $n-1$}{
  $(\ell, d) \leftarrow (0,0)$\;
  \While{$d \leq k$}{
          $\ell \leftarrow$ $\ell +$ LCA$(\ell, i+\ell)$\;
          \If{$d = k$ \mbox{and} $\ell = n-i$ }{
              \Return \True;
            }
            \If{$d < k$ \mbox{and} $\ell < n-i$}{
               $(\ell, d) \leftarrow (\ell+1, d+1)$\;
             }
             \Else{
               \Break
              }
        }
    }
\Return \False;
\caption{{\sc Word with a $k$-error border}($u$)}
\end{algorithm}

\begin{example}
  Let $u = 101011$. Let us check with Algorithm \ref{algorithm.1} whether $u$ has a
  $2$-error border.
  For $i=1$, at the first step of the loop of the line 3 we obtain
  at the line 4 $\ell = \mbox{LCA}(0,1) = 0$; we set $\ell$ to $1$ (the
  jump) and $d$ to $1$ at the line 8. At the second step of the loop of the line 3 we obtain
  at the line 4 $\ell = \ell + \mbox{LCA}(1,2) = 1$; we set $\ell$ to $2$ and
  $d$ to $2$ at the line 8. At the third step of the loop of the line
  3, we obtain at the line 4
  $\ell = \ell + \mbox{LCA}(2,3) = 2$ and break at the line 10.
  For $i=2$ the loop of the line 3 fails to return true.
  For $i=3$,  at the first step of the loop of the line 3 we obtain
  at the line 4 $\ell = \mbox{LCA}(0,3) = 0$; we set $\ell$ to $1$ and $d$
  to $1$ at the line 8. At the second step of the loop of the line 3 we obtain
  at the line 4 $\ell = \ell + \mbox{LCA}(1,4) = 1$;  we set $\ell$ to $2$
  and $d$ to $2$ at the line 8. At the third step of the loop of the
  line 3, we obtain the at line 4
  $\ell = \ell + \mbox{LCA}(2,5) = 3$ and, since $d=2$ and $\ell = n-i = 3$,
  the algorithm returns true at the line 6. The algorithm has thus detected
  the $2$-error border of length $3$.
  \end{example}

\begin{algorithm} \label{algorithm.2}
\KwIn{A non empty word $u$ of length $n$, a non-negative integer $k$,
  the $k$-prefix table $\pi_k$ and the $(k-1)$-prefix table $\pi_{k-1}$}
\KwOut{true if $u$ has a $k$-error border}
\BlankLine
\For{$i\leftarrow 1$ \KwTo $n-1$}{
  \If{$\pi_k[i] = n-i$ \mbox{and} $\pi_{k-1}[i] < n-i$}{
    \Return \True;
  }
}
\Return \False;
\caption{{\sc Word with a $k$-error border}($u$)}
\end{algorithm}

The following corollary follows then directly from Proposition
\ref{proposition.characterization} and the analysis of Algorithm
\ref{algorithm.1} in Proposition~\ref{prop2}.

 \begin{corollary}
   It can be checked in linear time and space whether a word is Hamming-isometric.
\end{corollary}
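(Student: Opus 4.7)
The plan is to simply combine the two ingredients already assembled in the paper. By Proposition~\ref{proposition.characterization}, a word $u$ fails to be Hamming-isometric if and only if $u$ admits a $2$-error border, and Proposition~\ref{prop2} supplies an $O(kn)$-time, $O(n)$-space algorithm (Algorithm~1) that decides the existence of a $k$-error border. The corollary is therefore the specialization of these two results to the constant value $k = 2$.

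Concretely, I would first invoke the preprocessing described in the proof of Proposition~\ref{prop2}: on a constant-size alphabet $A$, build the suffix tree of $u$ and enhance it with the range-minimum data structure over its Euler tour so as to answer LCA queries in constant time, the whole preprocessing running in $O(n)$ time and space. I would then execute Algorithm~1 on $u$ with parameter $k = 2$. For each starting index $i$, the inner \textbf{while} loop triggers at most $k + 1 = 3$ constant-time LCA queries before either reporting a $2$-error border of length $n - i$ or breaking out, so the total work devoted to the main loop amounts to $O(n)$. Adding the preprocessing cost still yields $O(n)$ time and $O(n)$ space overall.

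Finally, combining this with Proposition~\ref{proposition.characterization}, one obtains a decision procedure for Hamming-isometricity: return \emph{true} when Algorithm~1 answers \False{} and \emph{false} otherwise. There is really no obstacle to overcome: the only thing to observe is that the parameter $k$ in the $O(kn)$ bound of Proposition~\ref{prop2} is fixed once and for all by the characterization, so it can legitimately be absorbed into the hidden constant, producing the claimed linear time and space bounds.
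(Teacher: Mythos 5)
Your proposal is correct and follows exactly the paper's route: the corollary is obtained by combining Proposition~\ref{proposition.characterization} with the $O(kn)$-time, $O(n)$-space algorithm of Proposition~\ref{prop2} instantiated at the constant $k=2$ (and negating the answer, since a word is Hamming-isometric precisely when it has no $2$-error border). The paper states this combination in one sentence; you merely spell out the same argument in more detail.
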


\section{Algorithm for checking whether a word over an alphabet of size
  $4$ is Lee-isometric}

A combinatorial characterization of Lee-isometric words over an alphabet
of size $4$ has been obtained in
\cite{AnselmoEtAl2021}. It uses the notion of Lee distance
which is defined as follows. The \emph{Lee distance}, denoted by $d_L$,
between two letters of the alphabet $\mathbb{Z}_d=\{0, 1, \ldots,
d-1\}$ is
\[
  d_L(a,b) = \min(|a - b|, d-|a - b|).
\]
The Lee distance between two words $u$ and $v$ of length $n$ over
$\mathbb{Z}_d$ is
\[
  d_L(u,v) = \sum_{i = 0}^{n-1} d_L(u[i],v[i]).
\]
A word has a $k$-\emph{Lee-error border} if it has a suffix $u$ and a
prefix $v$ of same length satisfying $d_L(u,v)=k$.

For words over $\mathbb{Z}_4$, the Lee-isometric words are characterized
as follows in \cite{AnselmoEtAl2021}.

\begin{proposition} \label{proposition.characterization2}
  A word over a $4$-letter alphabet is non-Lee-isometric if and only if it has a 2-Lee-error border.
\end{proposition}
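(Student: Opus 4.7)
The plan is to prove both directions of the equivalence separately, mirroring the structure of the Hamming case in Proposition \ref{proposition.characterization}. Write $m=|f|$ and, for a candidate $2$-Lee-error border, let $\ell$ be the length of the matching suffix and prefix.

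For the sufficient direction, suppose $f$ has a suffix $s=f[m-\ell]\cdots f[m-1]$ and a prefix $p=f[0]\cdots f[\ell-1]$ with $d_L(s,p)=2$. I would build a witness word of length $2m-\ell$ by overlaying two copies of $f$: the left copy on positions $0,\ldots,m-1$ and the right copy on positions $m-\ell,\ldots,2m-\ell-1$. On the overlap the two copies prescribe $s$ against $p$. If the Lee weight $2$ is concentrated at a single overlap position $j$ (so $\{s[j],p[j]\}$ is either $\{0,2\}$ or $\{1,3\}$ in $\mathbb{Z}_4$), define $u$ to agree with both copies off the overlap, to take $s[i]=p[i]$ on every other overlap position, and at $j$ to take one of the two remaining symbols of $\mathbb{Z}_4$; let $v$ coincide with $u$ everywhere except at position $j$, where it takes the other remaining symbol. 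Then $u$ and $v$ are $f$-free, $d_L(u,v)=2$, and the only two Lee-geodesics between them pass through $s[j]$ or $p[j]$ at position $j$, each of which completes an occurrence of~$f$. The remaining case, two overlap mismatches of Lee weight $1$ each, is handled similarly by changing the symbols at both mismatching positions simultaneously via a length-$2$ Lee-path.

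For the necessary direction, assume $f$ is not Lee-isometric and let $(u,v)$ be a counterexample with $|u|=|v|=n$ minimal. The plan is to show that minimality forces the existence of a $2$-Lee-error border. By minimality, the first and last positions of $u$ and $v$ must differ (otherwise one could truncate the counterexample), and every Lee-unit move from $u$ toward $v$ produces a word containing $f$. Fix such a move, changing position $i$ from $u[i]$ to $u[i]\pm 1 \bmod 4$; let $g$ be the resulting occurrence of $f$ at some position $q\le i$ in the modified word. Symmetrically, any Lee-unit move from $v$ toward $u$ at some position $i'$ creates an occurrence of $f$ at some position $q'$. The analysis then tracks how these two occurrences of $f$ interact: because $u$ and $v$ themselves are $f$-free, the two occurrences must cross the modified positions and overlap, and comparing the letters of $f$ forced by each occurrence against the corresponding positions of the other yields a suffix/prefix of $f$ at Lee distance exactly~$2$.

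The main obstacle will be this final case analysis. One must rule out that the two induced occurrences of $f$ produce a border of Lee weight $1$ (which would contradict $f$ being a word, i.e., would force $f$ to coincide with a shorter occurrence) or of Lee weight greater than $2$ (which would contradict the fact that a single Lee-unit change on each side sufficed to create the occurrences). Careful bookkeeping of the two modified positions together with the overlap structure should pin the total Lee discrepancy to exactly $2$. Once this is established, Proposition \ref{proposition.characterization2} follows and, combined with the algorithm of Proposition \ref{prop2} adapted to Lee errors over a constant-size alphabet, gives the promised linear-time test for Lee-isometry on $\mathbb{Z}_4$.
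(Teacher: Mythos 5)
The paper does not prove Proposition~\ref{proposition.characterization2}: it is quoted from \cite{AnselmoEtAl2021}, and the paper's own contribution is only the algorithm that exploits it. So there is no internal proof to compare yours against; your proposal has to stand on its own, and as written it does not yet.

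There are two genuine gaps. First, in the sufficiency direction you assert ``Then $u$ and $v$ are $f$-free.'' Your construction only guarantees that $f$ does not occur at the two intended offsets $0$ and $m-\ell$ of the length-$(2m-\ell)$ witness (each copy of $f$ is spoiled at one overlap position); it does not exclude an occurrence of $f$ at some third offset. Ruling this out is a substantive part of the argument in the literature --- one typically has to choose the $2$-error overlap carefully (e.g.\ of minimal length) and then argue by cases --- and without it the witness pair may simply fail to be a pair of vertices of $Q_{2m-\ell}^4(f)$, so nothing is proved. Second, in the necessity direction the entire crux --- showing that the occurrence of $f$ created by a Lee-unit move from $u$ and the one created by a move from $v$ must overlap, and that the induced suffix/prefix discrepancy is exactly $2$ --- is explicitly deferred (``careful bookkeeping \dots should pin the total Lee discrepancy to exactly $2$''). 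This is precisely where the $\mathbb{Z}_4$ Lee case departs from the Hamming case: a single position can carry Lee weight $2$, so a geodesic may change the same position twice, and the dichotomy between one weight-$2$ mismatch and two weight-$1$ mismatches has to be threaded through the whole case analysis, not only through the sufficiency direction. As it stands the proposal is a reasonable plan in the spirit of \cite{AnselmoEtAl2021}, but the two steps above are the substance of the proof and are missing.
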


In this section we show that checking if a word of length $n$ has a
$k$-Lee-error border can be done in time $O(kn)$ and space $O(n)$.
The algorithm is Algorithm \ref{algorithm.3}. 

\begin{algorithm} \label{algorithm.3}
\KwIn{A non empty word $u$ of length $n$, a non-negative integer $k$}
\KwOut{true if $u$ has a $k$-Lee-error border}
\BlankLine
\For{$i\leftarrow 1$ \KwTo $n-1$}{
  $(\ell,d) \leftarrow (0,0)$\;
  \While{$d \leq k$}{
          $\ell \leftarrow$ $\ell +$ LCA$(\ell, i+\ell)$\;
          \If{$d = k$ \mbox{and} $\ell = n-i$ }{
              \Return \True;
            }
            \If{$d = k$ \mbox{and} $\ell < n-i$}{
               \Break
             }
             \If{$d < k$ \mbox{and} $\ell = n-i$}{
               \Break
             }
             $d \leftarrow d+ d_L(u[\ell],u[i+\ell])$\;
             $\ell \leftarrow \ell+ 1$\;
        }
    }
\Return \False;
\caption{{\sc Word with a $k$-Lee-error border}($u$)}
\end{algorithm}

\begin{proposition}\label{prop3}
It can be checked in time $O(kn)$ and space $O(n)$ whether a word of length
$n$ has a $k$-Lee-error border.
\end{proposition}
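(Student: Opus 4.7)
The plan is to adapt the Kangaroo-style algorithm from Proposition~\ref{prop2} so that, at each detected mismatch, we accumulate the Lee contribution of that position rather than simply incrementing a counter by one. The preprocessing is identical: build the suffix tree of $u$ in time and space $O(n)$, and enhance it with the linear-time data structure that answers LCA queries on suffixes in constant time, so that $\mathrm{LCA}(i,j)$ returns the length of the longest common prefix of $u_i$ and $u_j$.

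For each candidate border length $n-i$, I would maintain two running quantities $\ell$ (the current position along $u_0$) and $d$ (the current Lee distance between $u[0]\cdots u[\ell-1]$ and $u[i]\cdots u[i+\ell-1]$). Each iteration performs a single LCA query to jump to the next mismatching position, updates $\ell \leftarrow \ell + \mathrm{LCA}(\ell, i+\ell)$, and then replaces $d$ by $d + d_L(u[\ell], u[i+\ell])$ before advancing $\ell$ past that position. We stop and return \True{} as soon as $\ell = n-i$ and $d = k$; we abort the loop for this $i$ as soon as $d > k$ or $\ell > n-i$. The outer loop ranges over $i$ from $1$ to $n-1$ exactly as in Algorithm~1.

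The only place where the analysis differs from the Hamming case is the bound on the number of LCA queries per index. The essential observation is that on $\mathbb{Z}_4$ the Lee distance of a mismatching letter pair is $1$ or $2$, so it is at least $1$; more generally, on any fixed alphabet $\mathbb{Z}_d$ a mismatch contributes at least $1$ and at most a constant to the Lee distance. Consequently after at most $k+1$ mismatches we already have $d \geq k+1 > k$ and exit, so the inner loop performs $O(k)$ LCA queries per $i$. Summing over all indices gives total time $O(kn)$, while the preprocessed suffix tree and LCA structure use $O(n)$ space, establishing the proposition. The main subtlety to be careful about — and the only real deviation from the Hamming proof — is that a single mismatch may now push $d$ strictly past $k$, so the terminating test must be the exact equality $d = k$ at $\ell = n-i$, not a threshold inequality. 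Combined with Proposition~\ref{proposition.characterization2}, the case $k = 2$ then yields a linear-time Lee-isometry test over an alphabet of size four.
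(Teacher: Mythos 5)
Your proposal is correct and follows essentially the same route as the paper: run the Kangaroo method of Algorithm~1 with suffix tree plus constant-time LCA queries, but at each mismatch add $d_L(u[\ell],u[i+\ell])$ to $d$ instead of incrementing by one, and test for exact equality $d=k$ when the border length is reached (since a mismatch can overshoot $k$). You even make explicit the point the paper leaves implicit --- that each mismatch contributes at least $1$ to the Lee distance, so $O(k)$ LCA queries per index suffice; the only nit is that your abort condition should fire at $\ell = n-i$ with $d<k$ (as in the paper's Algorithm~2) rather than at $\ell > n-i$, which never occurs.
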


\begin{proof}
The algorithm is almost the same as Algorithm \ref{algorithm.1} and the proof is
similar to the proof of Proposition \ref{prop2}. Therefore we only
discuss the differences.

For every index $i$, we try to find if the suffix of index $i$ is at Lee
distance $k$ from its prefix of the same length.

The pseudo code of the technique is given in Algorithm \ref{algorithm.3}. A
variable $d$ contains, after the line 11, the current Lee distance between $u[i]
\cdots u[i + \ell]$ and $u[0] \cdots u[\ell]$.

The difference with Algorithm \ref{algorithm.1} appears when there is
mismatch between the suffix $u_i$ and the prefix $u[0] \cdots u[n-i-1]$ at
positions $\ell$ on $u_i$ and $i + \ell$ on $u[0] \cdots u[n-i-1]$. The
current Lee distance between $u[0] \cdots u[\ell]$ and $u[i] \cdots
u[i+\ell]$ is augmented this time by the value of $d_L(u[\ell],
u[i+\ell])$.

The case $d < k$ and $\ell = n-i$ of the line 9 of Algorithm
\ref{algorithm.3} corresponds to the case where the suffix of $u$ at
position $i$ is a $d$-Lee-error border with $d < k$ and is thus not a
solution. The algorithms continues then to check the position $i+1$.

Since there are at most $O(k)$ LCA queries for each index $i$, this can be
done in $O(k)$ time. The overall time complexity is thus $O(kn)$ and the
space complexity is $O(n)$.

\end{proof}

The following corollary follows then directly from Proposition
\ref{proposition.characterization2} and the analysis of Algorithm \ref{algorithm.2} in
Proposition~\ref{prop3}.

 \begin{corollary}
   It can be checked in linear time and space whether a word over an
   alphabet of size $4$ is Lee-isometric.
\end{corollary}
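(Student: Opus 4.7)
The plan is to derive the corollary as a direct specialization of Propositions~\ref{proposition.characterization2} and~\ref{prop3}. First I would reduce the algorithmic question to the combinatorial one: by Proposition~\ref{proposition.characterization2}, a word $f$ over $\mathbb{Z}_4$ is Lee-isometric if and only if it does not have a $2$-Lee-error border, so deciding Lee-isometry of $f$ amounts to deciding whether $f$ admits a $k$-Lee-error border for the fixed value $k=2$.

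Next I would invoke Proposition~\ref{prop3}, which furnishes Algorithm~2 running in $O(kn)$ time and $O(n)$ space for exactly this sub-problem. Setting $k=2$ collapses the time bound to $O(n)$ while the space bound remains $O(n)$. The linear-time preprocessing required by Algorithm~2 — namely the suffix tree of $f$ enhanced to answer LCA queries in constant time — remains linear here since the alphabet has constant size $4$; inside the Kangaroo loop, at most two LCA queries are issued per starting index $i$, each followed by an $O(1)$ evaluation of $d_L(u[\ell], u[i+\ell])$ on $\mathbb{Z}_4$.

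I do not anticipate any real obstacle: the corollary is a wrapper around Proposition~\ref{prop3}. The only bookkeeping step is to check that fixing $k$ to the constant $2$ indeed turns the $O(kn)$ bound into $O(n)$, and that the proof of Proposition~\ref{prop3} does not hide any dependency on the alphabet size beyond what the assumption of a size-$4$ alphabet already supplies. Both checks are immediate, so chaining Proposition~\ref{proposition.characterization2} with Proposition~\ref{prop3} at $k=2$ yields the claimed linear-time, linear-space decision procedure for Lee-isometry over alphabets of size four.
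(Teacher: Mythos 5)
Your proposal is correct and matches the paper's argument exactly: the paper also obtains this corollary by combining Proposition~\ref{proposition.characterization2} with the $O(kn)$-time, $O(n)$-space bound of Proposition~\ref{prop3} at $k=2$. Your additional remarks on the constant-size alphabet and the per-index cost are sound but not needed beyond what Proposition~\ref{prop3} already guarantees.
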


\section{Acknowledgment}
We thank Marcella Anselmo for helpful comments.

\bibliographystyle{plain}
\bibliography{border.bib}
\end{document}